\documentclass[aps,pra,twocolumn,amsmath,amssymb,nofootinbib,superscriptaddress]{revtex4-1}

\newcommand{\bra}[1]{\langle#1|}
\newcommand{\ket}[1]{|#1\rangle}

\newcommand{\BS}{{\sc BosonSampling}}

\usepackage[pdftex]{graphicx}
\usepackage{mathtools}
\usepackage{mathrsfs}
\usepackage{hyperref}
\usepackage{comment}
\hypersetup{colorlinks = true, linkcolor = [rgb]{0.19411,0.51882,0.667058}, urlcolor = [rgb]{0.125490,0.29542,0.1647058}, citecolor = [rgb]{0.75882,0.37411,0.14117}}
\usepackage{amsthm}


\let\latexchi\chi
\makeatletter
\renewcommand\chi{\@ifnextchar_\sub@chi\latexchi}
\newcommand{\sub@chi}[2]{
  \@ifnextchar^{\subsup@chi{#2}}{\latexchi^{}_{#2}}%
}
\newcommand{\subsup@chi}[3]{
  \latexchi_{#1}^{#3}%
}
\makeatother

\let\latexgamma\gamma
\makeatletter
\renewcommand\gamma{\@ifnextchar_\sub@gamma\latexgamma}
\newcommand{\sub@gamma}[2]{
  \@ifnextchar^{\subsup@gamma{#2}}{\latexgamma^{}_{#2}}%
}
\newcommand{\subsup@gamma}[3]{
  \latexgamma_{#1}^{#3}%
}
\makeatother

\begin{document}

\bibliographystyle{apsrev}

\title{A Quantum Optics Argument for the $\textbf{\#P}$-hardness\\ of a Class of Multidimensional Integrals}

\author{Peter P. Rohde}
\email[]{dr.rohde@gmail.com}
\homepage{http://www.peterrohde.org}
\affiliation{Centre for Quantum Computation and Intelligent Systems, Faculty of Engineering \& Information Technology, University of Technology Sydney, NSW 2007, Australia}
\affiliation{Hearne Institute for Theoretical Physics and Department of Physics \& Astronomy, Louisiana State University, Baton Rouge, LA 70803}

\author{Dominic W. Berry}
\affiliation{Department of Physics and Astronomy, Macquarie University, Sydney NSW 2113, Australia}
\affiliation{Hearne Institute for Theoretical Physics and Department of Physics \& Astronomy, Louisiana State University, Baton Rouge, LA 70803}

\author{Keith R. Motes}
\email[]{motesk@gmail.com}
\affiliation{Department of Physics and Astronomy, Macquarie University, Sydney NSW 2113, Australia}

\author{Jonathan P. Dowling}
\affiliation{Hearne Institute for Theoretical Physics and Department of Physics \& Astronomy, Louisiana State University, Baton Rouge, LA 70803}

\date{\today}

\frenchspacing
\begin{abstract}
Matrix permanents arise naturally in the context of linear optical networks fed with nonclassical states of light. In this letter we tie the computational complexity of a class of multi-dimensional integrals to the permanents of large matrices using a simple quantum optics argument. In this way we prove that evaluating integrals in this class is \textbf{\#P}-hard. Our work provides a new approach for using methods from quantum physics to prove statements in computer science.
\end{abstract}

\maketitle

Stemming from the seminal work by Aaronson and Arkhipov \cite{bib:AaronsonArkhipov10}, passive linear optical interferometers fed with quantum states of light, have attracted much interest, as a simple approach to implementing a computationally hard problem \cite{bib:Broome20122012, bib:Crespi3, bib:Tillmann4, bib:Spring2, bib:he16}. This so-called {\BS} problem exploits the fact that amplitudes of the output photon-number configurations are related to matrix permanents \cite{bib:Scheel04}, which are \mbox{\textbf{\#P}-hard} for exact computation in the worst case. The well-known Ryser's algorithm for computing permanents requires \mbox{$O(2^n n)$} runtime \cite{bib:Ryser63}. (For an elementary introduction to {\BS} see Ref. \cite{bib:GardBSintro}.) We show here that by considering the {\BS} problem in quantum-optical phase space, that we are able to express the output amplitudes in terms of multidimensional integrals, rather than in terms of matrix permanents. Because these two formalisms are physically equivalent, our result provides a quantum optical inspired insight into the computational complexity of evaluating this class of integrals. (Previous work has also considered {\BS} with states of light other than Fock states, such as Gaussian input states \cite{bib:olson2014sampling, bib:olson2014boson, bib:catSampling, bib:LundScatter13, bib:SalehLOcomp, bib:rahimi2015efficient}.) The computational complexity of this very simple system is of great interest with recent results showing {\BS}-inspired applications to quantum metrology and quantum chemistry simulations \cite{bib:MORDOR, bib:vibSpec}.

Our work in this letter shows broad applications for utilizing quantum optics tools, in particular, and quantum physics paradigms, in general, to pose and to answer questions about the computational complexity of certain mathematical problems.
 

In this work we use quantum optical characteristic functions to represent the output state of \BS\ as a multi-dimensional integral. This integral formalism directly maps to matrix permanents, and in this way we show that these integrals are also \textbf{\#P}-hard. 
Our construction is a new tool for examining open problems regarding the complexity of \BS-like problems. 

Finally, as an example of our formalism, we show that permutation matrices, whose permanents are easy to compute, are also simple to compute with the integral formalism. 

We begin by reviewing the \BS\ formalism. The relationship between output photon-number amplitudes and matrix permanents is easily obtained in the Heisenberg picture. This can be seen by evolving bosonic creation operators via a linear map,
\begin{align}
\hat{U} \hat{a}_j^\dag \hat{U}^\dag \to \sum_{k=1}^m U_{j,k} \hat{a}_k^\dag,
\end{align}
where there are $m$ modes, $\hat{a}_j^\dag$ is the creation operator for the $j^{th}$ mode, and $U$ is an arbitrary \mbox{$m\times m$} unitary matrix, \mbox{$U\in \mathrm{SU}(m)$}. Such a unitary can always be efficiently constructed using \mbox{$O(m^2)$} optical elements \cite{bib:Reck94, bib:MotesLoop}. By applying this unitary map to an input product state of the form,
\begin{align}
\ket{\psi_\mathrm{in}} &= \ket{1_1,\ldots,1_n,0_{n+1},\ldots,0_m} \nonumber \\
&= \hat{a}_1^\dag\ldots \hat{a}^\dag_n \ket{0}^{\otimes m} \nonumber \\
&= \prod_{j=1}^n \hat{a}_j^\dag \ket{0}^{\otimes m},
\end{align}
we find that the output state is,
\begin{align} \label{eq:state_out}
\ket{\psi_\mathrm{out}} &= \hat{U} \ket{\psi_\mathrm{in}} \nonumber \\
&= \prod_{j=1}^n \sum_{k=1}^m U_{j,k} \hat{a}_k^\dag \ket{0}^{\otimes m} \nonumber \\
&= \sum_T \gamma_T \ket{T_1,\ldots,T_m},
\end{align}
where $T$ denotes a photon-number configuration with $T_j$ photons in the $j^{th}$ mode, and total photon number is conserved, \mbox{$\sum_{j=1}^m T_j = n$}. The amplitudes in this highly entangled superposition are related to matrix permanents as,
\begin{align}
\gamma_T = \frac{\mathrm{perm}(U_T)}{\sqrt{T_1!\ldots T_m!}},
\end{align}
where $U_T$ is an \mbox{$n\times n$} submatrix of $U$ depending on the configuration $T$, obtained by taking rows and columns of $U$ corresponding to input and output photons in the configuration \cite{bib:Scheel04, bib:berry2004post}. The respective measurement probabilities are given by \mbox{$\mathcal{P}(T) = |\gamma_T|^2$}. The permanent is given by,
\begin{align}
\mathrm{perm}(U) = \sum_{\sigma\in S_m} \prod_{j=1}^m U_{j,\sigma_j},
\end{align}
where $\sigma$ are the permutations over $m$ elements. 
Specifically, the permanent arises upon symmetrization of the output state during photodetection because of the exchange symmetry of bosons. The number of terms in this superposition scales exponentially with $n$,
\begin{align}
|T| &= \binom{n+m-1}{n} \nonumber \\
&\gtrsim \frac{1}{\sqrt{2\pi}}2^{2n},
\end{align}
further complicating classical simulation. (Here, Stirling's approximation was used.)

An alternate, yet completely equivalent, formalism for modelling the output photostatistics of such a system is using characteristic functions \cite{bib:GerryKnight05}, which represent the state of the system in phase-space, and from which other representations such as the Wigner function can be calculated. This alternate formalism predicts the same outcomes, but expresses them differently, in terms of multidimensional integrals. Because these two formalisms must be equivalent, this may be used as a basis for characterizing a class of integral equations, which must similarly be \mbox{\textbf{\#P}-hard} to evaluate.

Let us begin with any $m$-mode separable input state of the form,
\begin{align} \label{eq:sep_input}
\hat\rho = \hat\rho_1 \otimes \ldots \otimes \hat\rho_m.
\end{align}
The single-mode characteristic $W$ function \cite{bib:GerryKnight05} is defined as,
\begin{align}
\chi_W(\lambda) = \mathrm{tr}[\hat\rho \cdot \hat{D}(\lambda)],
\end{align}
where \mbox{$\hat{D}(\lambda)$} is the displacement operator, given by,
\begin{align}
\hat{D}(\alpha) = \mathrm{exp}(\lambda \hat{a}^\dag - \lambda^* \hat{a}),
\end{align}
and $\lambda$ is an arbitrary complex number representing the amplitude of the displacement in phase-space. This definition straightforwardly generalizes to the multi-mode case as,
\begin{align}
\chi_W(\lambda_1,\ldots,\lambda_m) = \mathrm{tr}[\hat\rho \cdot \hat{D}_1(\lambda)\ldots \hat{D}_m(\lambda_m)],
\end{align}
where \mbox{$\hat{D}_j(\lambda_j)$} is the displacement operator on the $j^{th}$ mode. Then, the characteristic function, as shown in Appendix \ref{app:disp_ev}, for the state evolved via linear optics is,
\begin{align}
\chi_W^U(\lambda_1,\ldots,\lambda_m) &= \mathrm{tr}[\hat{U} \hat\rho \hat{U}^\dag \cdot \hat{D}_1(\lambda_1)\ldots \hat{D}_m(\lambda_m)] \nonumber \\
&= \mathrm{tr}[\hat\rho \cdot \hat{U}^\dag \hat{D}_1(\lambda_1)\ldots \hat{D}_m(\lambda_m) \hat{U}] \nonumber \\
&= \mathrm{tr}[\hat\rho \cdot \hat{D}_1(\mu_1)\ldots \hat{D}_m(\mu_m)],
\end{align}
where,
\begin{align} \label{eq:beta_from_alpha}
\mu_j = \sum_{k=1}^m \lambda_k U_{j,k}.
\end{align}
When the input state $\hat\rho$ is separable, as per Eq. (\ref{eq:sep_input}), 
with \mbox{$\hat\rho=(\ket{1}\bra{1})^{\otimes n}\otimes (\ket{0}\bra{0})^{\otimes (m-n)}$} so there are $n$ single photons in the first $n$ modes,
the multi-mode characteristic $\chi$ function reduces to,
\begin{align} \label{eq:INTmmCW}
\chi_W^U(\lambda_1,\ldots,\lambda_m) &= \mathrm{tr}[\hat\rho_1\cdot \hat{D}_1(\mu_1)] \ldots \mathrm{tr}[\hat\rho_m\cdot \hat{D}_m(\mu_m)] \nonumber \\
&= \prod_{j=1}^n \bra{1} \hat{D}(\mu_j) \ket{1} \prod_{j=n+1}^m \bra{0}\hat{D}(\mu_j)\ket{0}
\nonumber \\
&= \prod_{j=1}^n e^{-\frac{1}{2}|\mu_j|^2}\left(1-|\mu_j|^2\right) \prod_{j=n+1}^m e^{-\frac{1}{2}|\mu_j|^2} \nonumber \\
&= \prod_{j=1}^m e^{-\frac{1}{2}|\mu_j|^2} \prod_{j=1}^n \left(1-|\mu_j|^2\right) \nonumber \\
&= e^{-\frac{1}{2}\sum_{j=1}^m|\mu_j|^2} \prod_{j=1}^n \left(1-|\mu_j|^2\right) \nonumber \\
&= e^{-\frac{1}{2}\mathcal{E}(\vec\lambda)} \prod_{j=1}^n \left(1-|\mu_j|^2\right),
\end{align}
where we have used the identity of Lemma 2, as shown in Appendix \ref{app:disp_ov}, and $\mathcal{E}$ is the total energy of the system with amplitudes $\vec\lambda$ (or equivalently $\vec\mu$ due to energy conservation),
\begin{align}
\mathcal{E}(\vec\lambda) = \sum_{j=1}^m|\mu_j|^2 = \sum_{j=1}^m|\lambda_j|^2.
\end{align}

Note that the characteristic $W$ function of Eq.~\eqref{eq:INTmmCW}, $\chi_W^U$, can \emph{always} be efficiently calculated with \emph{any} separable input state, since it has a factorized form, and hence there is no exponential growth in the number of terms. The complexity arises when we wish to extract properties of the state, such as determining individual output amplitudes.

Next, we consider the Wigner function, which may be computed as a type of Fourier transform of $\chi_W$ \cite{bib:GerryKnight05},
\begin{align} \label{eq:SingleModeWigner}
W(\alpha) = \frac{1}{\pi^2} \int e^{\lambda^*\alpha - \lambda\alpha^*}  \chi_W(\lambda) d^2\lambda,
\end{align}
in the single-mode case, which again logically generalizes to the multi-mode case as,
\begin{align}
W(\vec\alpha) = \frac{1}{\pi^{2m}} \idotsint e^{\vec{\lambda}^* \cdot \vec\alpha - \vec{\lambda}\cdot \vec\alpha^*} \chi_W(\vec{\lambda}) d^2\vec{\lambda},
\end{align}
where all our complex integrals implicitly run over the range \mbox{$(-\infty,\infty)$}.

Let us denote,
\begin{align}
\beta_j = \sum_{k=1}^m \alpha_k U_{j,k},
\end{align}
We can then evaluate the Wigner function as, for the $n$-photon input,
\begin{align}
W(\vec\alpha) &= \frac{1}{\pi^{2m}}
\idotsint e^{\vec{\lambda}^*\cdot\vec{\alpha}-\vec{\lambda}\cdot\vec\alpha^*} e^{-\frac{1}{2}\mathcal{E}(\vec\lambda)} \nonumber \\
&\quad\times \prod_{j=1}^n \left(1-|\mu_j|^2\right) d^2\vec{\lambda} \nonumber \\
&=\frac{1}{\pi^{2m}}\idotsint e^{\vec{\mu}^*\cdot\vec{\beta}-\vec{\mu}\cdot\vec\beta^*} e^{-\frac{1}{2}\mathcal{E}(\vec{\mu})} \nonumber \\
&\quad\times \prod_{j=1}^n (1-|\mu_j|^2)  d^2\vec{\mu} \nonumber \\
&= \left(\frac{2}{\pi}\right)^m e^{-2|\vec{\beta}|^2} \prod_{j=1}^n (4|\beta_j|^2-1)  \nonumber \\
&= \left(\frac{2}{\pi}\right)^m e^{-2|\vec{\alpha}|^2} \prod_{j=1}^n \left(4\left|\sum_{k=1}^m \alpha_k U_{k,j}\right|^2-1\right). \nonumber \\
\end{align}
We have focused on the {\BS} case where the input state is \mbox{$\hat\rho=(\ket{1}\bra{1})^{\otimes n}\otimes (\ket{0}\bra{0})^{\otimes (m-n)}$}, and we will now consider a particular output probability $\mathcal{P}$, the one where a single photon is measured in the first $n$ output modes. This is determined by calculating the expectation value of the projector \mbox{$\hat\Pi = (\ket{1}\bra{1})^{\otimes n}\otimes (\ket{0}\bra{0})^{\otimes (m-n)}$}, which for this input state is equal to the expectation value of the $n$-dimensional number operator, \mbox{$\langle\hat{n}_1\ldots \hat{n}_n\rangle$}, where \mbox{$\hat{n}_j=\hat{a}_j^\dag\hat{a}_j$}. In the usual permanent-based approach, the absolute square of this amplitude corresponds to the permanent of a $n\times n$ submatrix of $U$.

Now, we will consider the phase-space approach for a particular output configuration of single photons at each output mode with $T=(1,\dots,1,0,\dots,0)$ and so $\mathcal{P}=|\mathrm{perm}(U^{n\times n})|^2$, where \mbox{$U^{n\times n}$} denotes the \mbox{$n\times n$} submatrix of $U$. For a single-mode state, the expectation value of the number operator is obtained from the Wigner function as \cite{bib:GerryKnight05},
\begin{align}
\langle \hat{n} \rangle =& \int W(\alpha) \left(|\alpha|^2-\frac{1}{2} \right) d^2\alpha,
\end{align}
where the \mbox{$|\alpha|^2-\frac{1}{2}$} term is obtained by expressing $\hat{n}$ in symmetrically ordered form \cite{bib:GerryKnight05}, \mbox{$\frac{1}{2}(\hat{a}^\dag\hat{a}+\hat{a}\hat{a}^\dag-1)$}, and making the well-known substitution \mbox{$\hat{a}^\dag \to \alpha^*$}, \mbox{$\hat{a}\to \alpha$}.

In the multimode case this expression generalizes to,
\begin{align} \label{eq:int_perm}
\mathcal{P} &= \langle \hat{n}_1 \ldots \hat{n}_n \rangle \nonumber \\
&= \idotsint W(\vec\alpha) \prod_{j=1}^n \left(|\alpha_j|^2-\frac{1}{2}\right) d^2\vec\alpha \nonumber \\
&= \left(\frac{2}{\pi}\right)^m \idotsint e^{-2|\vec{\alpha}|^2} \prod_{j=1}^n \left(4\left|\sum_{k=1}^m \alpha_kU_{k,j}\right|^2-1\right) \nonumber \\
&\quad\times \prod_{j=1}^n \left(|\alpha_j|^2 - \frac{1}{2}\right) d^2\vec{\alpha} \nonumber \\
&=  \left|\mathrm{perm}(U^{n\times n})\right|^2,
\end{align}
which is the primary result of this manuscript. That is, we have shown that these integrals are \mbox{\textbf{\#P}-hard} to evaluate, because they are equal to the square of the permanent. 

To simplify these integrals we use the identity,
\begin{align}
\label{idzer}
\int e^{-2|{\alpha}|^2}\left(|\alpha|^2 - \frac{1}{2}\right)d^2{\alpha} = 0.
\end{align}
This expression holds for each component of the vector $\vec{\alpha}$.
That means that we can simplify Eq. (\ref{eq:int_perm}) in the following way.
Expand out the first product in the third line to give a polynomial in $\alpha_k$ and $\alpha_k^*$. We can then eliminate terms in this polynomial by considering the integral when multiplied by \mbox{$\prod_{j=1}^n \left(|\alpha_j|^2 - \frac{1}{2}\right)$}.

First, if for any $k$, the terms contains $\alpha_k$ and $\alpha_k^*$ with a total power that is odd, then the integral is over an odd function, and must yield zero.
Second, if for any $k$ with \mbox{$1\le k \le n$}, the term does not contain $\alpha_k$ or $\alpha_k^*$, then the integral for that term must yield zero due to the identity in Eq. (\ref{idzer}).
Combining these two rules, we find that each remaining term must contain every $\alpha_k$, or $\alpha_k^*$ a nonzero (but even number) of times for each \mbox{$1\le k \le n$}.
Because the first product in Eq. (\ref{eq:int_perm}) is up to $n$, the maximum total power of $\alpha_k$ or $\alpha_k^*$ that can appear in any term is \mbox{$2n$}.
Since we must have each of these at least twice, they must appear exactly twice.

These considerations imply that all terms in the sum in the first product in Eq. (\ref{eq:int_perm}) for $m>n$ must yield zero.
Therefore the sum can be truncated to $n$ instead of $m$, which gives the expression in the form,
\begin{align} \label{eq:int_perm2}
\mathcal{P} &= \left(\frac{2}{\pi}\right)^m \idotsint e^{-2|\vec{\alpha}|^2} \left(4\left|\sum_{k=1}^n \alpha_kU_{k,j}\right|^2-1\right) \nonumber \\
&\quad\times \prod_{j=1}^n \left(|\alpha_j|^2 - \frac{1}{2}\right) d^2\vec{\alpha}.
\end{align}
In this expression, it is now clear that $\mathcal{P}$ can only depend on the \mbox{$n\times n$} submatrix of $U$.
That is to be expected because the probability should be given from the magnitude squared permanent of this submatrix.
In addition, the $-1$ term that appears in the third line in Eq. (\ref{eq:int_perm}) can only yield results that integrate to zero.
That means that we can further simplify $\mathcal{P}$ to,
\begin{align} \label{eq:int_perm3}
\mathcal{P} &= 4^n \left(\frac{2}{\pi}\right)^m \idotsint e^{-2|\vec{\alpha}|^2} \prod_{j=1}^n \left|\sum_{k=1}^n \alpha_kU_{k,j}\right|^2 \nonumber \\
&\quad\times \prod_{j=1}^n \left(|\alpha_j|^2 - \frac{1}{2}\right) d^2\vec{\alpha}.
\end{align}
Next, we can use the identity,
\begin{align}
\label{idpi}
\int e^{-2|{\alpha}|^2} d^2{\alpha} = \frac {\pi} 2,
\end{align}
and integrate over all $\alpha_k$ for \mbox{$k>n$} to give,
\begin{align} \label{eq:int_perm4}
\mathcal{P} &= \left(\frac{8}{\pi}\right)^n \idotsint e^{-2\sum_{j=1}^n|{\alpha_j}|^2} \prod_{j=1}^n \left|\sum_{k=1}^n \alpha_kU_{k,j}\right|^2 \nonumber \\
&\quad\times\prod_{j=1}^n \left(|\alpha_j|^2 - \frac{1}{2}\right) d^2\vec{\alpha}.
\end{align}To simplify further, we expand the first product of Eq.~\eqref{eq:int_perm4} into a sum of terms.
It is easy to check that,
\begin{align}
\label{idzer2}
\int e^{-2|{\alpha}|^2}\alpha^2\left(|\alpha|^2 - \frac{1}{2}\right)d^2{\alpha} = 0,
\end{align}
and similarly for $(\alpha^*)^2$.
That means that any terms where we have $\alpha_k^2$ or $(\alpha_k^*)^2$ will integrate to zero.
The only terms that do not integrate to zero are those with exactly the product \mbox{$|\alpha_1|^2|\alpha_2|^2\ldots |\alpha_n|^2$}.
It is also easy to check that,
\begin{align}
\label{idpi2}
\int e^{-2|{\alpha}|^2}|\alpha|^2\left(|\alpha|^2 - \frac{1}{2}\right)d^2{\alpha} = \frac{\pi} 8,
\end{align}
and therefore,
\begin{align}
\left(\frac{8}{\pi}\right)^n \idotsint e^{-2\sum_{j=1}^n|{\alpha_j}|^2} \prod_{j=1}^n |\alpha_j|^2 \nonumber \\
\quad\times \prod_{j=1}^n \left(|\alpha_j|^2 - \frac{1}{2}\right) d^2\alpha_1 \ldots d^2 \alpha_n = 1.
\end{align}
The value of $\mathcal{P}$ therefore corresponds to the coefficient of \mbox{$|\alpha_1|^2|\alpha_2|^2\ldots |\alpha_n|^2$} in the product \mbox{$\prod_{j=1}^n \left|\sum_{k=1}^n U_{k,j}\alpha_k\right|^2$}.
It is convenient to express this product as,
\begin{align}
\prod_{j=1}^n \left|\sum_{k=1}^n U_{k,j}\alpha_k\right|^2 = \left(\prod_{j=1}^n \sum_{k=1}^n U_{k,j}\alpha_k \right)\left(\prod_{j=1}^n \sum_{k=1}^n U_{k,j}^*\alpha_k^*\right).
\end{align}
To find the coefficient of \mbox{$|\alpha_1|^2|\alpha_2|^2\ldots |\alpha_n|^2$}, we can use the MacMahon Master theorem for permanents on both expressions in brackets on the right-hand side \cite{bib:macmahon60}.
The coefficient of \mbox{$\alpha_1\alpha_2\ldots\alpha_n$} in the expression in the first brackets is ${\rm perm}(U^{n\times n})$.
Similarly, the coefficient of \mbox{$\alpha_1^*\alpha_2^*\ldots\alpha_n^*$} in the expression in the second brackets is \mbox{${\rm perm}[(U^{n\times n})^*]$}.
Since the permanent of the complex conjugate is the complex conjugate of the permanent, this is equal to \mbox{$[{\rm perm}(U^{n\times n})]^*$}.
As a result, the coefficient of \mbox{$|\alpha_1|^2|\alpha_2|^2\ldots |\alpha_n|^2$} in the product \mbox{$\prod_{j=1}^n \left|\sum_{k=1}^n U_{k,j}\alpha_k\right|^2$} is \mbox{$|{\rm perm}(U^{n\times n})|^2$}.
That means that we can evaluate $\mathcal{P}$ as \mbox{$\mathcal{P} =|{\rm perm}(U^{n\times n})|^2$}.

Thus we find that we have a number of forms of the integral for the probability, and that this integral can be evaluated to the square of the permanent, which is exactly as we expect.  We could consider any of these intermediate forms of the integral, but will concentrate on Eq. (\ref{eq:int_perm}) for simplicity. Equation (\ref{eq:int_perm}) provides us with an equivalence between two alternate forms for a particular output amplitude of the linear optical network, the first expressed in terms of a matrix permanent, and the second in the form of a multidimensional integral. Because the former is known to be \mbox{\textbf{\#P}-hard} in the worst case, it follows that the latter is also.

While calculating matrix permanents is \mbox{\textbf{\#P}-hard} in the worst case, there are many special cases where symmetry or sparsity may be exploited to efficiently calculate the permanent. One such class of matrices is the permutation matrices \mbox{$\sigma\in S_m$}, elements of the symmetric group. We will show explicitly that the integral approach to calculating \mbox{$\mathrm{perm}(\sigma)$} is computationally efficient, consistent with existing understanding of the complexity of permanents.

When \mbox{$U=\sigma$}, we have the property,
\begin{align}
\sum_{k=1}^m  \alpha_kU_{j,k}^* = \alpha_{\sigma_j},
\end{align}
and we see that $\mathcal{P}$ 
is separable across $\vec{\alpha}$. In this case, the $n$-dimensional integral from Eq. (\ref{eq:int_perm}) also becomes separable, forming an $n$-dimensional product of integrals,
\begin{align} \label{eq:permutationResult}
\mathcal{P} &= \left(\frac{2}{\pi}\right)^m \idotsint e^{-2|\vec{\alpha}|^2} \prod_{k=1}^n \left(4\left| \alpha_{\sigma_k}\right|^2-1\right) \nonumber \\
&\quad\times \prod_{j=1}^n \left(|\alpha_j|^2 - \frac{1}{2}\right) d^2\vec{\alpha} \nonumber \\
&= \left(\frac{2}{\pi}\right)^m \Bigg\{ \prod_{j=1}^n \int e^{-2|\alpha_j|^2} \left(4\left| \alpha_{j}\right|^2-1\right) \nonumber \\
&\quad\times \left(|\alpha_j|^2 - \frac{1}{2}\right) d^2\alpha_j \Bigg\} \Bigg\{\prod_{j=n+1}^{m} \int e^{-2|\alpha_j|^2}  d^2\alpha_j \Bigg\}\nonumber \\
&= \left(\frac{2}{\pi}\right)^m \left(\frac{\pi}{2}\right)^n \left(\frac{\pi}{2}\right)^{m-n} \nonumber \\
&= 1,
\end{align}
where the integrals can easily be evaluated. Therefore the integrals in Eq.~\eqref{eq:permutationResult} are computationally efficient to evaluate given the separable structure, confirming that \mbox{$\mathrm{perm}(\sigma)=1$} for all $\sigma$ and $m$, as expected. (We note that the identity matrix is a member of the class $\sigma$.)

Therefore, as a special example, we have just shown that a particular class of matrices --- the permutation matrices --- are classically efficient to calculate in the integral formalism due to their separable structure. However, one can also work in the opposite direction, by finding integrals that are classically efficient to calculate, and then work backwards to argue that calculating the corresponding matrix permanent is computationally efficient. This illustrates the power of our quantum optics method for determining the computational complexity of certain multi-dimensional integrals.  
 
We have shown that by employing two alternate but equivalent approaches to expressing the output amplitudes of linear optics networks fed with single-photon inputs, we are able to provide a quantum optical equivalence between matrix permanents and a particular class of multidimensional integrals. This implies that this class of integrals is \mbox{\textbf{\#P}-hard} in the worst case. 

The equivalence provides two important insights with broad impact. Firstly, it demonstrates the \mbox{\textbf{\#P}-hardness} of these multi-dimensional integrals. Secondly, by expressing the permanent in integral form, existing knowledge of the structure of integrals provides further insight into the computational complexity of permanents. Finally, we have shown that tools from quantum optics can be used to prove results in computational complexity theory. 

\begin{acknowledgments}
We thank Ryan Mann, Michael Bremner, and Dave Wecker for helpful discussions. PPR acknowledges the financial support of Lockheed-Martin. DWB is funded by an ARC Future Fellowship (FT100100761) and an ARC Discovery Project (DP160102426). KRM acknowledges the Australian Research Council Centre of Excellence for Engineered Quantum Systems (Project number CE110001013). JPD acknowledges support from the Air Force Office (Grant No. FA9550-13-10098), the Army Research Office (Grant No. W911NF-13-1-0381), the National Science Foundation (Grant No. 1403105), and the Northrop Grumman Corporation.
\end{acknowledgments}


\clearpage
\appendix

\section{Evolution of displacement operators through a linear optics network} \label{app:disp_ev}

\newtheorem{lemma}{Lemma}
\begin{lemma}
A linear optical network maps a product of displacement operators over $m$ modes, with amplitudes $\lambda_j$, to another product of displacement operators with amplitudes given by \mbox{$\mu_j = \sum_{k=1}^m U_{j,k} \lambda_k$}.
\end{lemma}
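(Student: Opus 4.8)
The plan is to work at the level of the exponents of the displacement operators and exploit the fact that conjugation by $\hat U$ acts linearly on the mode operators. First I would note that displacement operators on distinct modes commute, so the product may be collected into a single exponential,
\begin{align}
\hat{D}_1(\lambda_1)\cdots\hat{D}_m(\lambda_m) = \exp\!\Big(\sum_{k=1}^m \big(\lambda_k \hat{a}_k^\dag - \lambda_k^* \hat{a}_k\big)\Big),
\end{align}
using the Baker--Campbell--Hausdorff identity (the commutators between different-mode terms vanish, and same-mode cross terms produce only a global phase that cancels). This reduces the problem to understanding how a single linear form $\sum_k(\lambda_k \hat{a}_k^\dag - \lambda_k^* \hat{a}_k)$ transforms under conjugation.

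Next I would use the elementary operator identity $\hat{U}^\dag e^{\hat{X}} \hat{U} = e^{\hat{U}^\dag \hat{X} \hat{U}}$, valid termwise on the power series, to push the conjugation inside the exponential. The action of $\hat U$ on the creation operators is given in Eq.~(1) of the main text; taking its adjoint gives the action on the annihilation operators, and using unitarity of $U$ (i.e.\ $U^{-1}=U^\dag$) one inverts these relations to obtain $\hat{U}^\dag \hat{a}_k^\dag \hat{U}$ and $\hat{U}^\dag \hat{a}_k \hat{U}$ as explicit linear combinations of the $\hat{a}_j^\dag$ and $\hat{a}_j$. Substituting these into the exponent and interchanging the order of the (finite) summations over $j$ and $k$, I would collect the coefficient of each $\hat{a}_j^\dag$; this coefficient is precisely $\mu_j = \sum_{k=1}^m U_{j,k}\lambda_k$, and the coefficient of $\hat{a}_j$ is then automatically $-\mu_j^*$ because complex conjugation commutes with the real linear combination. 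Recognising $\exp\!\big(\sum_j (\mu_j \hat{a}_j^\dag - \mu_j^* \hat{a}_j)\big)$ as $\prod_{j=1}^m \hat{D}_j(\mu_j)$ completes the argument.

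The only genuine subtlety — and the step I expect to require the most care — is the index bookkeeping: keeping the matrix convention consistent with Eq.~(1) and tracking which relation carries a complex conjugate, so that the final coefficients come out as $\mu_j = \sum_k U_{j,k}\lambda_k$ rather than with $U^*$ or $U^T$. Everything else (commutation across modes, termwise validity of the exponential conjugation identity, and the appeal to unitarity) is standard and routine in the quantum-optics setting.
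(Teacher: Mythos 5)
Your proposal is correct and takes essentially the same route as the paper's own proof: merge the commuting displacements into a single exponential, push the unitary conjugation into the exponent, apply the linear action on the mode operators, and re-factor the result as a product of displacements $\prod_j \hat{D}(\mu_j)$. The only cosmetic difference is that the paper conjugates the creation operators directly via $\hat{U}(\cdot)\hat{U}^\dag$ using Eq.~(1) rather than inverting the relation through unitarity, and your extra care with the index bookkeeping is warranted --- that is indeed where the transpose/conjugate conventions are easiest to get wrong.
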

\begin{proof}
\begin{align}
\hat{U} \left(\prod_{j=1}^m \hat{D}(\lambda_j)\right) \hat{U}^\dag &=\ \mathrm{exp}\left(\sum_{j=1}^m \hat{U}( \lambda_j \hat{a}^\dag_j - \lambda^*_j \hat{a}_j) \hat{U}^\dag\right) \nonumber \\
&=\ \mathrm{exp}\left(\sum_{j=1}^m \sum_{k=1}^m \lambda_j U_{j,k} \hat{a}^\dag_k - \lambda_j^* U_{j,k}^* \hat{a}_k\right) \nonumber \\
&=\ \prod_{j=1}^m \mathrm{exp}\left(\sum_{k=1}^m \lambda_j U_{j,k} \hat{a}^\dag_k - \lambda_j^* U_{j,k}^* \hat{a}_k\right) \nonumber \\
&=\ \prod_{j=1}^m \hat{D}(\mu_j),
\end{align}
where,
\begin{align}
\mu_j &= \sum_{k=1}^m \lambda_k U_{j,k},
\end{align}
which is Eq. (\ref{eq:beta_from_alpha}) that we set out to prove.
\end{proof}

\section{Overlap of the displacement operator with the single-photon state} \label{app:disp_ov}

\begin{lemma}
The overlap between the single-photon state with the displacement operator \mbox{$\bra{1} \hat{D}(\lambda) \ket{1}$} is given by \mbox{$e^{-\frac{1}{2}|\lambda |^2}(1-|\lambda |^2)$}.
\end{lemma}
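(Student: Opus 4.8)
The plan is to collapse this matrix element to a two-term sum by writing the displacement operator in normal-ordered form. Since the commutator $[\hat{a},\hat{a}^\dag]=1$ is central, the Baker--Campbell--Hausdorff identity gives the standard factorization
\begin{align}
\hat{D}(\lambda)=e^{\lambda\hat{a}^\dag-\lambda^*\hat{a}}=e^{-\frac{1}{2}|\lambda|^2}\,e^{\lambda\hat{a}^\dag}\,e^{-\lambda^*\hat{a}},
\end{align}
so that $\bra{1}\hat{D}(\lambda)\ket{1}=e^{-\frac{1}{2}|\lambda|^2}\bigl(\bra{1}e^{\lambda\hat{a}^\dag}\bigr)\bigl(e^{-\lambda^*\hat{a}}\ket{1}\bigr)$, and the problem is reduced to evaluating the two outer factors.

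Next I would evaluate those factors using the ladder action on the low Fock states. Because $\hat{a}\ket{1}=\ket{0}$ and $\hat{a}\ket{0}=0$, the power series for $e^{-\lambda^*\hat{a}}$ truncates after two terms when applied to $\ket{1}$, giving $e^{-\lambda^*\hat{a}}\ket{1}=\ket{1}-\lambda^*\ket{0}$. Taking the adjoint of the analogous identity $e^{\lambda^*\hat{a}}\ket{1}=\ket{1}+\lambda^*\ket{0}$ yields $\bra{1}e^{\lambda\hat{a}^\dag}=\bra{1}+\lambda\bra{0}$. Multiplying the two and using orthonormality of $\ket{0}$ and $\ket{1}$,
\begin{align}
\bra{1}\hat{D}(\lambda)\ket{1}=e^{-\frac{1}{2}|\lambda|^2}\bigl(1-|\lambda|^2\bigr),
\end{align}
which is the claimed identity. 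The same manipulation applied to $\ket{0}$ immediately gives $\bra{0}\hat{D}(\lambda)\ket{0}=e^{-\frac{1}{2}|\lambda|^2}$, the vacuum overlap that is also invoked in Eq.~\eqref{eq:INTmmCW}.

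There is no substantive obstacle here; the calculation is routine, and the only points deserving care are getting the $e^{-\frac{1}{2}|\lambda|^2}$ prefactor and the signs right in the BCH reordering, and keeping track of the complex conjugation when passing from a ket to the corresponding bra. As an independent cross-check one could instead quote the known Fock-basis matrix elements of the displacement operator in terms of associated Laguerre polynomials, $\bra{m}\hat{D}(\lambda)\ket{n}=\sqrt{n!/m!}\,\lambda^{m-n}e^{-\frac{1}{2}|\lambda|^2}L_n^{(m-n)}(|\lambda|^2)$ for $m\ge n$; setting $m=n=1$ and using $L_1^{(0)}(x)=1-x$ reproduces the stated formula.
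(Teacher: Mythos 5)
Your proof is correct, but it takes a different (equally standard) route from the paper's. You normal-order the displacement operator via the Baker--Campbell--Hausdorff factorization $\hat{D}(\lambda)=e^{-\frac{1}{2}|\lambda|^2}e^{\lambda\hat{a}^\dag}e^{-\lambda^*\hat{a}}$ and then exploit the truncation of the exponential series on the low Fock states; the paper instead writes $\ket{1}=\hat{a}^\dag\ket{0}$, commutes $\hat{a}^\dag$ through $\hat{D}(\lambda)$ using $[\hat{a}^\dag,\hat{D}(\lambda)]=\lambda^*\hat{D}(\lambda)$, and reads off the resulting matrix elements as overlaps $\langle 0|\lambda\rangle$ and $\langle 1|\lambda\rangle$ with the coherent state $\ket{\lambda}=\hat{D}(\lambda)\ket{0}$. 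The two arguments are essentially of the same length and difficulty; yours has the mild advantage of delivering the vacuum overlap $\bra{0}\hat{D}(\lambda)\ket{0}=e^{-\frac{1}{2}|\lambda|^2}$ (also needed in Eq.~\eqref{eq:INTmmCW}) by the identical mechanism, and of generalizing transparently to arbitrary $\bra{m}\hat{D}(\lambda)\ket{n}$ (the Laguerre-polynomial cross-check you quote), whereas the paper's approach keeps everything at the level of a single commutator and the familiar Fock expansion of a coherent state. All the signs and conjugations in your manipulation check out.
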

\begin{proof}
\begin{align} \label{eq:charProof}
\bra{1}\hat{D}(\lambda)\ket{1} &=\ \bra{0}\hat{a} \hat{D}(\lambda) \hat{a}^\dag\ket{0} \nonumber \\
&=\ \bra{0}\hat{a} (\hat{a}^\dag -\lambda ^*) \hat{D}(\lambda)\ket{0} \nonumber \\
&=\ \bra{0}\hat{a}\hat{a}^\dag \hat{D}(\lambda) \ket{0} - \bra{0}\lambda^*\hat{a} \hat{D}(\lambda)\ket{0} \nonumber \\
&=\ \langle 0|\lambda\rangle - \lambda ^*\langle 1|\lambda\rangle \nonumber \\
&=\ e^{-\frac{1}{2}|\lambda|^2} - |\lambda |^2 e^{-\frac{1}{2}|\lambda|^2} \nonumber \\
&=\ e^{-\frac{1}{2}|\lambda|^2}(1 - |\lambda |^2),
\end{align}
which is the identity that we used to obtain Eq. (\ref{eq:INTmmCW}). We have used the commutation relation between the displacement operator and the photon-creation operator \cite{bib:GerryKnight05},
\begin{align}
[\hat{a}^\dagger, \hat{D}(\lambda)]=\lambda^*\hat{D}(\lambda),
\end{align}
as well as the coherent state represented in the Fock basis,
\begin{align}
\ket{\lambda}=e^{-\frac{|\lambda|^2}{2}}\sum_{n=0}^{\infty}\frac{\lambda^n}{\sqrt{n!}}\ket{n},
\end{align}
to read off the overlaps between the Fock states and the coherent state $\ket{\lambda}$.
\end{proof}

\end{document}